\newtheorem{theorem}{Theorem}
\newtheorem{fact}{Fact}
\newtheorem{lemma}{Lemma}
\newtheorem{corollary}{Corollary}
\newcommand{\bra}[1]{\langle #1|}
\newcommand{\ket}[1]{|#1\rangle}
\newcommand{\braket}[2]{\langle #1|#2\rangle}
\newcommand{\cent}[0]{\mbox{\textcent}}
\newcommand{\dollar}[0]{\$}
\newcommand{\confsep}{@ @ @ @ }
\newcommand{\val}{\mathsf{val}}
\newcommand{\nxt}{\mathsf{next}}
\newcommand{\myvec}[1]{ \left( \begin{array}{c} #1 \end{array} \right) }
\newcommand{\mymat}[2]{ \left( \begin{array}{#1} #2 \end{array} \right) }
\newcommand{\mypar}[1]{\left( #1 \right)}
\newcommand{\knapsackgame}{\mathtt{KNAPSACK\mbox{-}GAME}}
\begin{document}

\title{QIP $ \subseteq $ AM(2QCFA) } 

\author{Abuzer Yakary{\i}lmaz}
\email{abuzer.yakaryilmaz@lu.lv}
\affiliation{Center for Quantum Computer Science, University of Latvia, R\={\i}ga, LV-1586, Latvia}
\orcid{0000-0002-2372-252X}
\thanks{}
%\author{Zeyu Chen}
%\email{chenzeyu@zju.edu.cn}
%\orcid{???}
%\thanks{}
%\affiliation{School of Mathematical Sciences, Zhejiang University, Hangzhou 310058, People's Republic of China}
\maketitle

\begin{abstract}
    The class of languages having polynomial-time classical or quantum interactive proof systems ($\mathsf{IP}$ or $\mathsf{QIP}$, respectively) is identical to $\mathsf{PSPACE}$. We show that $\mathsf{PSPACE}$ (and so $\mathsf{QIP}$) is subset of $\mathsf{AM(2QCFA)}$, the class of languages having Arthur-Merlin proof systems where the verifiers are two-way finite automata with quantum and classical states (2QCFAs) communicating with the provers classically. Our protocols use only rational-valued quantum transitions and run in double-exponential expected time. Moreover, the member strings are accepted with probability 1 (i.e., perfect-completeness).
\end{abstract}

\section{Introduction}

Proof systems play a central role in classical and quantum complexity theory \cite{Co93A,Wa09}. In polynomial time computation, interactive proof systems (IPSs) \cite{GMR85,GMR89}, Arthur-Merlin (AM) proof systems \cite{Ba85}, and quantum interactive proof systems \cite{Wat99B} have the same verification power, i.e., they can verify all and only languages in $ \mathsf{PSPACE}$.

In constant-space, IPSs with two-way probabilistic finite state automaton (2PFA) verifiers can verify any language in $ \mathsf{ASPACE(n)} $ (i.e., alternating linear space) \cite{DS92}. On the other hand, AM systems with 2PFA verifiers are much weaker: any language verified by them is in $\mathsf{P}$, and they cannot verify some languages in $\mathsf{L}$ such as $\mathtt{PAL} = \{ w \in \{a,b\}^* | w = w^{reverse} \} $ \cite{Co91,DS92}.

More than a decade ago, Yakary{\i}lmaz introduced and investigated AM system where the verifier is a two-way automaton with quantum and classical states (2QCFA) \cite{Yak12A,Yak13C}. We denote $\mathsf{AM_1(2QCFA})$ as the class of the languages verified by these systems with perfect-completeness (i.e., member strings are accepted by the verifier with probability 1). 
Yakary{\i}lmaz showed that $\mathsf{AM_1(2QCFA})$ contains $ \mathsf{NP} $, $ \mathsf{ASPACE(n)} $, and an $\mathsf{NEXP}$-complete problem. In this paper, we show that $\mathsf{AM_1(2QCFA})$ also contains $\mathsf{PSPACE}$. We use similar techniques given in \cite{Yak12A,Yak13C}.

We give the definitions and notations used throughout the paper in Section~\ref{sec:preliminaries}. Then, we present several technical details used in our proofs in Section~\ref{sec:technical-details}. After that, we present a protocol for a $\mathsf{PSPACE}$-complete language $\knapsackgame$ in Section~\ref{sec:knapsack-game}. We follow our main result in Section~\ref{sec:linear-space-TMs}. We close the paper with Section~\ref{sec:concluding-remarks}.

We assume the reader is familiar with proof systems, quantum computing, and automata theory. For background on classical and quantum proof systems, we refer to \cite{Co93A,Wa09};  for a comprehensive review of quantum automata, see \cite{Ay15,AY21}; and for further details on AM systems with the verifiers using constant-size quantum memory, see \cite{Yak12A,Yak13C}.

\section{Preliminaries}
\label{sec:preliminaries}

Two-way finite automaton with quantum and classical states (2QCFA) is a classically controlled automaton accessing a finite-size quantum register \cite{AW02}. A 2QCFA can apply unitary or measurement operators to its quantum register, and it can also process any onserved measurement outcome. It is also known as two-way quantum finite automaton with classical head \cite{YS11A}. Here, we define 2QCFAs with superoperators (easily simulated by unitary operators and projective measurements with the help of classical states, e.g., \cite{YS11A}) to shorten our quantum programs and to use only rational transition values. 

A superoperator $\mathcal{E}$ is composed by $k$ operational elements $ \{ E_1, \ldots, E_k \} $ for some $k>0$ satisfying that
\[
    \sum_{j=1}^k E_j^\dagger E_j = I.
\]
Equivalently, the columns of the following matrix form an orthonormal set
\[
    \begin{array}{|c|}
         \hline 
         E_1 \\ \hline
         E_2 \\ \hline
         \vdots \\ \hline
         E_k \\ \hline
    \end{array} ~ .
\]
If $k=1$, then it is a unitary operator. Any measurement operator can also be a represented as a superoperator, where the operator indices are associated with the measurement outcomes.

After applying $\mathcal{E}$ to the quantum state $\ket{v}$, denoted $\mathcal{E}(\ket{v})$, we obtain the following unnormalized state vectors:
\begin{equation}
    \label{eq:unnormalized-list}
    \{ \widetilde{\ket{v_1}}, \ldots, \widetilde{\ket{v_k}}  \},    
\end{equation}
where $ \widetilde{\ket{v_j}} = E_j \ket{v}$. Here the outcome $j$ can be observed with probability $ p_j =  \braket{v_j}{v_j} $, and so $ \widetilde{\ket{v_j}} $ is normalized as $ \ket{v_j} = \frac{ \widetilde{\ket{v_j}} }{ \sqrt{p_j} } $ (if $p_j>0$). Thus, the mixture obtained by $ \mathcal{E}(\ket{v}) $ is represented as
\[
    \left\{  (p_1,\ket{v_1}), \ldots, (p_k,\ket{v_k}) \right\}.
\]
This mixture can be represented as a single mathematical object called density matrix:
\[
    \rho = \sum_{j=1}^k p_j \ket{v_j}\bra{v_j} = \sum_{j=1}^k \widetilde{\ket{v_j}} \widetilde{\bra{v_j}}.
\]

In this paper, though we use superoperators with several operational elements, we keep our ``useful'' computation in one of the pure states after each operation and discard the rest of pure states. Thus, the density matrix is not used.

Arthur-Merlin (AM) proof systems are known as public-coin IPSs, as the computation of verifiers are visible to the provers. A 2QCFA verifier is formally a 8-tuple
\[
    V =(S,Q,\Sigma,\Upsilon,\delta,s_I,q_I,s_{acc},s_{rej}),
\]
where
\begin{itemize}
    \item $S $ is a finite set of classical states, formed by two disjoint sets, $S_c$ and $S_r$ containing the communication and reading states, respectively;
    \item $Q$ is a finite set of quantum states;
    \item $ \Sigma $ is the input alphabet not containing the left and right en-markers ($\cent$ and $\dollar$, respectively);
    \item $ \Upsilon $ is the communication alphabet;
    \item $ \delta $ is the transition function, formed by $\delta_c$, $\delta_r$, and $\delta_q$ responsible for communication, classical control, and quantum transitions, respectively;
    \item $s_I \in S$ is the starting state; and,
    \item $s_{acc},s_{rej} \in S$ are the accepting and rejecting states, respectively. 
\end{itemize}

A given input $w \in \Sigma^*$ is placed on a read-only input tape between two end-makers as $ \cent w \dollar $. The verifier $V$ starts in states $ (s_I,\ket{q_I}) $, and the input head is placed on $\cent$.

The verifier $V$ communicates with a prover through a communication channel storing one symbol from $\Upsilon$ or a measurement outcome (i.e., the index of the corresponding operation element) of superoperators. When in a communication state, say $s_c$, $V$ writes a symbol on the communication channel, and then the prover overwrites it with their respond, say $ \upsilon $. After that $V$ updates its state as $s'$, where $\delta_c(s_c,\upsilon) = s'$.

When in a reading state, say $s_r$, and scanning an input symbol, say $\sigma$, $V$ makes two transitions. First, a superoperator is determined by $ \delta_q (s_r,\sigma) = \mathcal{E} $, and it is applied to the quantum register. The measurement outcome, say $j$, is written on the communication channel to be shared with the prover. Second, $V$ updates its state and head position based on the current classical state, scanning input symbol, and the measurement outcome: $ \delta_r(s_r,\sigma,j) = (s',d) $, where $s'$ is the new classical state; and the input head moves one symbol to the left (right), if $d=-1$ ($d=1$), and, the input head does not move if $d=0$. It must be guaranteed that the input head never leaves $ \cent w \dollar $ on the tape. 

The input $w$ is accepted if $V$ enters $s_{acc}$, and it is rejected if $V$ enters $s_{rej}$. In either case, the computation is terminated.

A language $L \subseteq \Sigma^*$ is said to be verified by $V$ with error bound $\epsilon \in [0,\frac{1}{2}) $ if and only if
\begin{enumerate}
    \item when $w \in L$, $V$ communicates with an honest prover $P$ and accepts $w$ with probability at least $1 - \epsilon$, and,
    \item when $w \notin L$, $V$ rejects $w$ with probability at least $1 - \epsilon$ by communicating with any possible prover $P^*$ (including cheating ones). 
\end{enumerate}
It is also said that $L$ is verified by $V$ with bounded error. If every member is accepted with probability 1, then it is said that $L$ is verified with perfect-completeness.

The class $\mathsf{AM(2QCFA)}$ is the set of languages verified by 2QCFAs with bounded error, and the class $\mathsf{AM_1(2QCFA)}$ is the set of languages verified by 2QCFAs with perfect-completeness. Here we assume that the entries of superoperators are formed by rational numbers. (With arbitrary real-valued transitions, for example, 2QCFAs can verify every language \cite{SY17}.) These classes may also be defined with algebraic or efficiently computable numbers (see \cite{BV97,Wat03}).

\section{Technical details}
\label{sec:technical-details}

\subsection{Rational-valued superoperators}
\label{sec:rational-super}

We do our calculations with rational-valued or integer matrices throughout the paper. More specifically, we solve certain tasks with these matrices and then embed them into some superoperators. 

Our generic technique is as follows. Suppose that we use two rational-valued matrices $ M_1 $ and $ M_2 $. They form our main operation elements after dividing by some integers. Then, we define as many as auxiliary operation elements to have a superoperator: 
\[
    \dfrac{1}{D} ~ \begin{array}{|c|}
         \hline 
         M_1 \\ \hline
         M_2 \\ \hline
         M_3 \\ \hline
         \vdots \\ \hline
         M_k \\ \hline
    \end{array} = \mathcal{E} = 
    \begin{array}{|c|}
         \hline 
         E_1 \\ \hline
         E_2 \\ \hline
         E_3 \\ \hline
         \vdots \\ \hline
         E_k \\ \hline
    \end{array},
\] 
where $1/D$ is the normalization factor, $E_1$ and $E_2$ are the main operation elements, and $E_3,\ldots,E_k$ are auxiliary operation elements for some $k>1$. The value of $D$ and the entries are $M_3, \ldots, M_k$ are determined to ensure that $\mathcal{E}$ is a superoperator. The columns of 
\[
    \begin{array}{|c|}
         \hline 
          M_1 \\ \hline
         M_2 \\ \hline
    \end{array},
\]
may not be orthogonal or may not have the same length. Step by step, by adding the entries below (as the entries of $M_3$, $M_4$, \ldots) we can make the columns are pair-wise orthogonal and then by adding further entries to make the lengths are equal. We refer the reader to \cite{YS11A} for the details. 

We always pick $D$ as some positive integers.  If we have several superoperators with different $D$ values, we add more auxiliary operation elements to ensure that all has the same $D$. 

\subsection{Encoding binary numbers}
\label{sec:binary-encoding}

 We present encoding of binary numbers by using two-dimensional vectors. We start in 
    \[ 
        v_0 = \myvec{1 \\ 0 }. 
    \]
    When reading symbols $0$ and $1$, we multiply it with the following matrices:
    \[
        A_0 = \mymat{cc}{1 & 0 \\ 0 & 2}
        ~~\mbox{and}~~
        A_1 = \mymat{cc}{1 & 0 \\ 1 & 2}.
    \]
    The first entry does not change, and the binary values are encoded on the second entry. We show the correctness of this method by induction.
    
    Let $b$ be a binary number and $\val(b)$ be its value. We start with trivial cases when $b=0$ and $b=1$:
    \[
        v_1  = A_0 \cdot v_0 = \mymat{cc}{1 & 0 \\ 0 & 2} \myvec{1 \\ 0 } = \myvec{1 \\ 0} = \myvec{1 \\ \val(0)}
    \]
    and
    \[
        v_1 = A_1 \cdot v_0 = \mymat{cc}{1 & 0 \\ 1 & 2} \myvec{1 \\ 0 } = \myvec{1 \\ 1} = \myvec{1 \\ \val(1)} .
    \]
    
    Suppose that $b$ with length $i>0$ is encoded as described. That is, we have
    \[
        v_{i} = \myvec{1 \\ \val(b) } .
    \] 
    Then, we read $0$ or $1$:
    \[
        v_{i+1} = A_0 v_{i} = \mymat{cc}{1 & 0 \\ 0 & 2} \myvec{1 \\ \val(b) } = \myvec{1 \\ 2 \cdot \val(b)} = \myvec{1 \\ \val(b0)}
    \]
    or
    \[
        v_{i+1} = A_1 v_{i} = \mymat{cc}{1 & 0 \\ 1 & 2} \myvec{1 \\ \val(b) } = \myvec{1 \\ 2 \cdot \val(b) + 1 } = \myvec{1 \\ \val(b1)}.
    \]
    As easy to see, $b0$ and $b1$ are encoded correctly.

\subsection{Encoding $m$-ary numbers}
\label{sec:m-ary-encoding}

This is a generalization of the previous subsection. We use $ m $ digits for $m$-ary numbers: $\{0,1,\ldots,m-1\}$. When reading $ k $ among them, we use the following encoding matrix:
\[
    A_k = \mymat{cc}{1 & 0 \\ k & m}.
\]
The rest is the same.

\subsection{Coin-flip}
\label{sec:coin-flip}

For a given quantum register, we can use the following superoperator with two operational elements to simulate a fair coin-flip:
\[
    E_0 =  \dfrac{1}{\sqrt{2}} I ~~\mbox{and}~~ E_1 = \dfrac{1}{\sqrt{2}} I,
\]
where $I$ is the identity matrix. Thus, we can observe the outcomes 0 and 1 with equal probability while the quantum state does not change. 

To be consistent with the framework given in Sec.~\ref{sec:rational-super}, we can use a superoperator with $D^2$ operational elements:
\[
     \underbrace{\dfrac{1}{D} I, \ldots, \dfrac{1}{D}}_{D^2~\mbox{times}} I.
\]
Here we can pick any two of them as the main operational elements.

\subsection{Superoperator for the identity operator}
\label{sec:identity}

To have a well-tuned normalization factor in our protocols, we may use the the following superoperator instead of the identity operator:
\[
     \underbrace{\dfrac{1}{D} I, \ldots, \dfrac{1}{D}}_{D^2~\mbox{times}} I.
\]
Here we can pick one of them as the main operational element.

\subsection{Encoding the configurations of a Turing machine}
\label{sec:encode-conf-TM}

Here we explain how to encode the configurations of a single-tape single-head Turing Machine (TM), say $M$, on any given input, say $w$. The tape is two-way infinite indexed by the integers. Any given input $w$ is placed from the left to the right on the tape starting from the tape cell indexed by 1. The rest of cells contain $\#$ symbols at the beginning of the computation. Any configuration of $M$ in any step is of the form $xsy$, where $s$ is the state of $M$, $xy$ is the significant piece of the tape content obtained by removing all insignificant blank symbols except the ones on the left and right, and the tape head is situated on the left-most symbol of $y$. The initial configuration is $ s_{I} \# w \# $, where $s_I$ is the initial state. When in state $s$ and reads symbol $\sigma$, $M$ updates its state with $s'$, overwrites the symbol under the head with $\sigma'$, and updates the head position by at most one cell based on its transition rule $ (s,\sigma) \rightarrow (s',\sigma',d) $, where $d \in \{-1,0\,1\}$ represents of the head moves to the left, stays stationary, or moves to the right, respectively. 

As the tape head can move at most one cell in a single transition, the length of a configuration can be changed by at most 1. Here are two representative examples:
\begin{itemize}
    \item We implement the transition $(s,\#)  \rightarrow (s',\#, -1) $ when in the configuration $s \# 0 0 1 \# $. Thus, the new configuration becomes $ s' \# \# 0 0 1 \# $. The blank symbol under the head is overwritten with itself, and then the head is moved one cell to the left. The length of configuration is increased by 1.
    \item We implement the transition $(s,1)  \rightarrow (s',\#, -1) $ when in the configuration $ \# 10 s 1 \# $. Thus, the new configuration is $ \# 1 s' 0 \# $. The symbol $1$ under the head is overwritten with the blank symbol, and then the head is moved one cell to the left. The length of configuration is decreased by 1. 
\end{itemize}

Moreover, in a single transition, at most three consecutive symbols of a configuration can be changed:
\begin{itemize}
    \item The symbol next to the state can be changed.
    \item The state itself can be changed.
    \item The new state can be swapped with the symbol next to it or before it.
\end{itemize}

Each configuration is represented by a $m$-ary string for some $m>0$, and so, we can encode their values in base-$m$. Here $m$ is the size of the configuration alphabet formed by the tape alphabet and the set of states. Each symbol in the configuration alphabet is associate by a number in $\{0,1,\ldots,m-1\}$, where  
$\#$ is never associated with 0. Thus, none of the configurations represented in base-$m$ starts with 0.

We can encode each configuration by using the integer matrices given in Sec.~\ref{sec:m-ary-encoding}. While reading a configuration from left to right, we can also easily output the valid successor configuration. By reading the first three symbols, we can output the first symbol of the (valid) successor configuration. After that, for each symbol of the configuration, we can output a new symbol of the successor configuration. After reading the whole configuration, we use up to three more steps to complete outputting the successor configuration. While outputting the successor configuration, we can also encode its value in the same way of encoding the value of scanning configuration.

%%%%%%%%%%%%%%%%%%%%%%%%%%%%%%%%%%%%%
%%%%%%%%%%%%%%%%%%%%%%%%%%%%%%%%%%%%%
\section{Knapsack game}
\label{sec:knapsack-game}
%%%%%%%%%%%%%%%%%%%%%%%%%%%%%%%%%%%%%
%%%%%%%%%%%%%%%%%%%%%%%%%%%%%%%%%%%%%

We give the definition of a Knapsack game shown to be PSPACE-complete under log-space reduction \cite{Travers06,FearnleyJ13,FearnleyJ15}:
\begin{equation}
	\label{eq:knapsackgame}
	 \knapsackgame = \{ S~ \forall(a_1 , b_1) \exists(e_1,f_1) \cdots \forall(a_n , b_n) \exists(e_n,f_n)  \},
\end{equation}
where 
\begin{itemize}
	\item $S$ and each $ a_i,b_i,e_i $, and $f_i$ are natural numbers given in binary ($1 \leq i \leq n$); and,
	\item for every $ x=(x_1,\ldots,x_n) \in \times_{i=1}^n  \{ a_i,b_i \} $, there exists a $ y=(y_1,\ldots,y_n) \in \times_{i=1}^n  \{ e_i,f_i \} $ such that $ S = \sum_{i=1}^n = x_i+y_i $.
\end{itemize}

\begin{theorem}
    \label{thm:knapsack-game}
    The language $\knapsackgame$ can be verified by a 2QCFA with perfect-completeness in exponential expected time. 
\end{theorem}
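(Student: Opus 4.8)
The plan is to build a 2QCFA verifier $V$ that plays the game encoded by $w$ round by round, using its public coins for the universal moves and the prover for the existential moves, and then checks the resulting arithmetic identity $S=\sum_{i=1}^{n}(x_i+y_i)$ with a single quantum test that is errorless on yes-instances. On input $w=S\,\forall(a_1,b_1)\exists(e_1,f_1)\cdots\forall(a_n,b_n)\exists(e_n,f_n)$, the verifier makes one left-to-right sweep of the tape. Just before reading the $i$-th block it flips a fair coin $c_i$ (Sec.~\ref{sec:coin-flip}); since the measurement outcome is placed on the communication channel the prover learns $c_i$, and $c_i$ selects $x_i:=a_i$ if $c_i=0$ and $x_i:=b_i$ otherwise. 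While reading the block, $V$ processes the selected one of $a_i,b_i$ and skips the other by identity steps (Sec.~\ref{sec:identity}); on reaching the existential part it queries the prover for a bit $d_i$, which selects $y_i:=e_i$ if $d_i=0$ and $y_i:=f_i$ otherwise, and again it skips the unselected number. Because $V$ demands $d_i$ before flipping $c_{i+1}$, the prover commits to $y_i$ knowing only $x_1,\dots,x_i$, exactly as in the game.

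During the same sweep $V$ maintains, in a three-dimensional quantum register whose first coordinate is kept equal to $1$, the integer $K:=S-\sum_{j\le i}(x_j+y_j)$. It encodes the value of the number currently being read into the auxiliary (third) coordinate using the base-$m$ matrices of Sec.~\ref{sec:binary-encoding}--\ref{sec:m-ary-encoding}, and when that number ends it transfers the auxiliary coordinate into the running (second) coordinate with an integer matrix --- adding for $S$, subtracting for each $x_i$ and $y_i$ --- and resets the auxiliary coordinate to $0$. Subtraction, hence negative entries and a possibly negative $K$, is harmless because the final test depends only on $K^2$. All these integer matrices are embedded into rational-valued superoperators as in Sec.~\ref{sec:rational-super}. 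Whenever a superoperator returns one of its auxiliary ``garbage'' outcomes, $V$ abandons the current sweep and restarts it from $\cent$; since the matrices are fixed the normalization $D$ is $O(1)$, so a sweep completes after $2^{O(|w|)}$ expected restarts.

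At the end of the sweep the register is the normalization of $(1,K,0)^{\mathsf{T}}$, and $V$ measures in the computational basis: it reports ``reject'' on the outcome of probability $K^2/(1+K^2)$ and ``continue'' on the outcome of probability $1/(1+K^2)$. Thus if $K=0$ the sweep always continues, and if $K\neq 0$ it rejects with probability at least $\tfrac12$. To control soundness, $V$ repeats this whole procedure with fresh coins $\Theta(2^{n})$ times, entering $s_{rej}$ as soon as some repetition rejects and entering $s_{acc}$ only after all repetitions continue. For completeness: if $w\in\knapsackgame$ the honest prover follows a winning existential strategy, so $S=\sum_i(x_i+y_i)$ and $K=0$ on every coin sequence, hence every repetition continues and $w$ is accepted with probability $1$. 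For soundness: if $w\notin\knapsackgame$ the existential player has no winning strategy, and since for a fixed prover a play is a deterministic function of the $n$ coins, every prover leaves at least a $2^{-n}$ fraction of coin sequences with $K\neq 0$; each such sequence rejects with probability $\ge\tfrac12$ in its repetition, so $\Theta(2^{n})$ independent repetitions reject with probability at least $\tfrac12$. The expected running time is the $\Theta(2^{n})$ repetitions times $O(|w|)$ head moves times $2^{O(|w|)}$ expected sweep-restarts, i.e.\ $2^{O(|w|)}$, which is exponential.

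The step I expect to be the real obstacle is performing this exact arithmetic --- summing up to $2n$ integers of length $\Theta(|w|)$ and deciding equality with $S$ --- inside a constant-size quantum register that has only rational transitions and must never err on a member string; the base-$m$ encoding together with the exact computational-basis zero-test is what makes it possible. A secondary point that needs care is bounding the expected number of garbage-outcome restarts, and the number of amplification rounds, tightly enough that the total running time stays merely exponential.
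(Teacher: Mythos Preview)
Your encoding of the game and the arithmetic check are essentially the paper's: play the rounds left to right, use public coins for $\forall$, the prover for $\exists$, accumulate $K=S-\sum_i(x_i+y_i)$ in an amplitude via the integer matrices of Sec.~\ref{sec:binary-encoding}, and reject whenever the $K$-amplitude is observed. The gap is in your amplification step.

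You write that $V$ ``repeats this whole procedure with fresh coins $\Theta(2^n)$ times'' and accepts only if every repetition continues. A 2QCFA cannot do this: its classical configuration is a pair (state, head position), of size $O(|w|)$, so it has no way to maintain a deterministic counter up to $\Theta(2^n)$. Replacing the counter by a randomized halt (``after each continue, flip $n$ fair coins and accept if all are heads'') does not rescue the argument as stated either: in one completed round a non-member can make $K\neq 0$ with probability only $2^{-n}$, so the per-round rejection probability is $\approx 2^{-n-1}$ while the per-round acceptance probability is $\approx 2^{-n}$; the ratio is the wrong way and the infinite loop accepts with constant probability.

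The paper fixes exactly this point by carrying a fourth basis state $q_4$ whose amplitude is halved twice per block, so the unnormalized end state is $(1,\,S-T,\,0,\,(1/4)^n)$. The accept decision is tied to $q_4$ and therefore has weight $(1/16)^n$ inside each iteration of an \emph{infinite} loop; summed over the $2^n$ coin sequences this gives total accept weight $2^n(1/16)^n$, while a single bad coin sequence contributes reject weight at least $1$. The reject/accept ratio is then at least $8^n$, yielding rejection probability $>8/9$ with no external counter. In short, the exponential ``number of repetitions'' you need must be encoded in an amplitude, not in a classical loop counter.
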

\begin{proof}
    At the beginning, the verifier reads the input once and deterministically checks if it is in the form of Eq.~\ref{eq:knapsackgame}. If not, the input is rejected and computation is terminated.

    In the remaining part, we assume that the input is in the form of Eq.~\ref{eq:knapsackgame}. The verifier uses a quantum register with 4 states, $\{q_1,q_2,q_3,q_4\}$.
    
    The verifier executes an infinite loop. In each iteration, the verifier reads the input from the left to the right. For each symbol, it applies a superoperator. If the outcome is one of the main operational elements, the computation continues. Otherwise, the current iteration is terminated. Until reading the right end-marker, each superoperator has only a single main operation element. Thus, as long as the iteration is not terminated, the computation continues with a single pure state. In this way, with an exponentially small probability, the computation survives until the end, and then the verifier makes a decision.

    \textbf{Step 0.} The quantum register starts in state $ (1~~0~~0~~0)^T$. By reading the left end-marker, the unnormalized quantum state is set to 
    \[
        N_{\cent} \myvec{1 \\ 0 \\ 0 \\ 1}
    \]
    in the non-terminated part, where $N_{\cent}$ is the normalization factor.
    Then the verifier reads $S$ and attempts to encode its value into the amplitude of $q_2$. If this encoding is successful, the unnormalized quantum state is
    \[
        N_0 \myvec{1 \\ \val(S) \\ 0 \\ 1},
    \]
    where $N_0$ is the cumulative normalization factor so far. For encoding, we embed $A_0$ and $A_1$ given in Sec.~\ref{sec:binary-encoding} into the related main operation elements of superoperators.

    \textbf{Step 1.} The verifier reads the first $\forall$ and then makes a fair coin-flip as described in the second part of Sec.~\ref{sec:coin-flip}. If the outcome is 0, then the verifier encodes the value of $a_1$. Otherwise, it encodes the value of $b_1$. In other words, the verifier picks $x_1 \in \{a_1,b_1\}$. Binary encoding technique is the same as in Step 0 but the verifier uses the amplitude of $q_3$ this time, i.e., the value of $x_1$ is encoded into the amplitude of $q_3$ with the help of the amplitude of $q_1$. Before reading the first $\exists$ symbol, the encoded value is subtracted from $S$ and the amplitude of $q_3$ is set to 0. Besides, the amplitude of $q_4$ is halved. To have a well-tuned normalization factor at the end, the verifier applies the superoperator described in Sec.~\ref{sec:identity} while reading any other symbol.

    After that, the verifier requests from the prover one bit value: If the prover sends $0$ (resp., $1$), the verifier encodes the value of $e_1$ (resp., $f_1$) into the amplitude of $q_3$. We represent the encoded value as $y_1 \in \{e_1,f_1\}$. After that, the value of $y_1$ is subtracted from $S$, and the amplitude of $q_3$ is set to 0 and the amplitude of $q_4$ is halved. As described above, to have a well-tuned normalization factor at the end, the verifier applies the superoperator described in Sec.~\ref{sec:identity} while reading any other symbol.
    
    At this point, unless the iteration is terminated, the unnormalized quantum state, is
    \[
        N_1 \myvec{1 \\ val(S) - x_1 - y_1 \\ 0 \\ 1 / 4},
    \]
    where $N_1$ is the cumulative normalization factor so far.

    \textbf{Step 2.} The verifier reads $\forall (a_2,b_2) \exists (e_2,f_2) $ and do the same as in Step 1. If $x_2 \in \{a_2,b_2\}$ is picked by the verifier and $ y_2 \in \{e_2,f_2\} $ is picked by the prover, the unnormalized quantum state, unless the iteration is terminated, is
    \[
        N_2 \myvec{1 \\ val(S) - x_1 - y_1 - x_2 - y_2 \\ 0 \\ (1/4)^2 },
    \]
    where $N_2$ is the cumulative normalization factor so far.

    \textbf{Steps 3., 4., \dots, $\mathbf{n}$.} We continue in the same way. Let $T$ be the summation of all chosen $x_i$s and $y_i$s:
    \[
        T = ( x_1 + y_1) + (x_2+y_2) +  \cdots + (x_n + y_n).
    \]
    Thus, if the iteration has not been terminated until reading the right end-marker, the quantum state is
    \[
        N_n \myvec{1 \\ S-T \\ 0 \\ (1/4)^n},
    \]
    where $N_n$ is the cumulative normalization factor so far.

    \textbf{Step $\mathbf{n+1}$.} After reading $\dollar$, the input is accepted based on the amplitude of $q_4$, and it is rejected based on the amplitude of $q_2$. Otherwise, the verifier continues with the next iteration. For this, we obtain two unnormalized quantum states as
    \[
        \ket{v_{acc}} = N_{\dollar} \myvec{0 \\ 0 \\ 0 \\ (1/4)^n }
        ~~\mbox{and}~~
        \ket{v_{rej}} = N_{\dollar} \myvec{0 \\ S-T \\ 0 \\ 0 },
    \]
    where $N_{\dollar}$ is an exponentially small value in the input length. If the verifier reads in total $l$ symbols (the input and end-markers) in one iteration, then indeed we have
    \[
        N_{\dollar} =  \left( \frac{1}{D} \right)^{l},
    \]
    which is the same for every sequence of choices by the verifier and prover. 

    \textbf{Decision.} We start with the analysis of the accepting and rejecting probabilities of a single iteration. We have $2^n$ different universal choices. For each of them, the verifier provides existential choices. 

    If the input is a member string, we know that for each universal choices we have
    \[
    S=T
    \]
    by communicating with an honest prover. That means the input is rejected with probability 0.

    Each universal choice leads us to an acceptance condition and it is probability is 
    \[
        N_{\dollar}^2 \left( \dfrac{1}{16} \right)^n.
    \]
    Thus, the member strings are accepted with probability 1 (i.e., perfect-completeness) in exponential expected time.

    If the input is not a member string, we know that in at least one of the universal choices, we have $S \neq T$. Then, we check the ratio of rejecting and accepting probabilities in a single iteration. If $S \neq T$, then $(S-T)^2$ can be at least 1. Then, this ratio cannot be less than
    \[
        \dfrac{1}{ 2^n \cdot (1/16)^n } = 8^n,
    \]
    where the acceptance probabilities come from all universal choices ($2^n$ in total).
    Thus, the non-member strings are rejected with probability at least
    \[
        \dfrac{8}{8+1} > 0.888.
    \]
    This probability can increased by tuning the acceptance probability. The running time is again exponential in the input length.
\end{proof}

%%%%%%%%%%%%%%%%%%%%%%%%%%%%%%%%%%%%%
%%%%%%%%%%%%%%%%%%%%%%%%%%%%%%%%%%%%%
\section{Simulating linear-space Turing machines}
\label{sec:linear-space-TMs}
%%%%%%%%%%%%%%%%%%%%%%%%%%%%%%%%%%%%%
%%%%%%%%%%%%%%%%%%%%%%%%%%%%%%%%%%%%%

\begin{fact}
    \label{fact:linear-DTM} \cite{Yak13C}
    Any language recognized by a linear-space DTM, say $M$, can be verified by a 2QCFA with perfect-completeness in double-exponential expected time.
\end{fact}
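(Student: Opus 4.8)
The plan is to reduce to the machinery used in the proof of Theorem~\ref{thm:knapsack-game}: have the prover stream the entire computation history of $M$ on the input $w$ over the communication channel and have the constant-memory verifier check, on the fly, that it is a genuine accepting history. First I would normalize $M$. Since $M$ runs in linear space, we may assume it has a single halting configuration $c_{acc}$ for acceptance and a single one $c_{rej}$ for rejection, both of fixed (input-independent) form, and --- by equipping $M$ with a step counter written on $O(n)$ extra tape cells --- that $M$ halts on every input of length $n$ within $B=m^{cn}$ steps for suitable constants $c,m>0$, where $m$ is the size of the configuration alphabet fixed as in Section~\ref{sec:encode-conf-TM} (so that no configuration starts with the digit $0$). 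Every configuration then has length $O(n)$, and an accepting computation is $c_0 \vdash c_1 \vdash \cdots \vdash c_t$ with $c_0 = s_I \# w \#$, $c_t = c_{acc}$, and $t \leq B$.

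The protocol follows the loop pattern of Theorem~\ref{thm:knapsack-game}: the verifier runs an infinite loop in which every superoperator has a single main operational element, so that the ``useful'' computation proceeds as one pure state surviving each step with probability $1/D$ while every other outcome aborts the current iteration. After a deterministic well-formedness test of the input on the tape, the honest prover transmits $c_0 \confsep c_1 \confsep \cdots \confsep c_t \confsep$, one symbol per communication round, and any syntactically ill-formed message --- for instance a configuration beginning with the digit $0$ --- is rejected at once by classical control. Using a constant-size quantum register, while a configuration $c_i$ is being transmitted the verifier simultaneously (i) encodes $\val(c_i)$ into one amplitude by embedding the base-$m$ matrices of Section~\ref{sec:m-ary-encoding} into the main operational elements and (ii) emits the valid successor $\nxt(c_i)$ symbol by symbol, which is possible by the discussion of Section~\ref{sec:encode-conf-TM}, encoding $\val(\nxt(c_i))$ into a second amplitude; when $c_{i+1}$ then arrives it encodes $\val(c_{i+1})$ into a third amplitude and, at the delimiter, applies the integer matrix that replaces one of them by $\val(\nxt(c_i)) - \val(c_{i+1})$. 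Because the transmitted configurations never start with $0$, the base-$m$ encoding is injective on them, so this difference is exactly $0$ iff $c_{i+1} = \nxt(c_i)$. In the same pass the verifier also subtracts the encoded value of $\cent w \dollar$ from that of the claimed $c_0$, and subtracts the hard-coded $\val(c_{acc})$ from that of the last transmitted configuration. Every such difference is routed into a dedicated rejection amplitude $q_{rej}$, a complete faithful pass deposits a suitably damped value (as with $q_4$ in Theorem~\ref{thm:knapsack-game}) into an acceptance amplitude $q_{acc}$, and the identity and coin-flip superoperators of Sections~\ref{sec:coin-flip} and~\ref{sec:identity} are applied on all remaining symbols so that the cumulative normalization factor $N_{\dollar} = (1/D)^{\ell}$, with $\ell$ the number of symbols read in the iteration, is the same for every prover. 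At the end of a surviving iteration the verifier measures and accepts from $q_{acc}$, rejects from $q_{rej}$, or otherwise starts a new iteration; to also handle a prover that never completes the transcript, the branches discarded at each step are arranged to carry the current value of $q_{rej}$, so that a flaw already committed is caught with positive probability even if the iteration never reaches its end.

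For the analysis, if $w \in L(M)$ the honest transcript is a genuine accepting history, so every difference above is $0$ exactly (all arithmetic being over the integers and rationals); hence $q_{rej} = 0$ throughout and $w$ is never rejected, while $q_{acc} \neq 0$ forces acceptance with probability $1$ after finitely many iterations, giving perfect completeness. If $w \notin L(M)$ then, since $M$ always halts, there is no valid accepting transcript: along every syntactically valid path the verifier reaches within $\leq B+1$ configurations either an invalid transition, a wrong $c_0$, or a non-accepting halting configuration, each of which makes $q_{rej}$ nonzero, after which --- exactly as in Theorem~\ref{thm:knapsack-game} --- the ratio of rejection to acceptance probability in an iteration exceeds a constant greater than $1$ (by tuning the damping of $q_{acc}$), so $w$ is rejected with probability bounded away from $\frac{1}{2}$. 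Since the transcript has length $\Theta(B) = 2^{\Theta(n)}$, the useful branch survives a whole iteration only with probability $(1/D)^{2^{\Theta(n)}} = 2^{-2^{\Theta(n)}}$, so the expected number of iterations, and hence the expected running time, is double-exponential in $|w|$.

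The step I expect to be the crux is the streaming, constant-memory check that $c_{i+1} = \nxt(c_i)$: the verifier must read an unbounded-length configuration while faithfully accumulating its base-$m$ value, emit its successor on the fly, and compare that against the prover's next message, all the while keeping the normalization factor independent of the prover's choices so that the final $q_{acc}$ and $q_{rej}$ amplitudes remain comparable --- which is exactly what the encodings of Sections~\ref{sec:binary-encoding}, \ref{sec:m-ary-encoding} and~\ref{sec:encode-conf-TM} are built to support. A secondary subtlety is ruling out a prover that stalls by never completing the transcript; this rests on $M$ being made to halt via the linear-space step counter, together with the device above of letting the discarded probability mass carry $q_{rej}$, which leaves perfect completeness untouched.
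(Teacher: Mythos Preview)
Your high-level plan is the same as the paper's: the prover streams the configuration history, and the verifier, using the $m$-ary encodings of Sections~\ref{sec:m-ary-encoding} and~\ref{sec:encode-conf-TM}, simultaneously encodes each received configuration and its valid successor into amplitudes and compares them. The double-exponential running time estimate and the perfect-completeness argument are also as in the paper.

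Where you diverge is in the \emph{rejection mechanism}. The paper does not keep a single running rejection amplitude. Instead, at each block delimiter it applies a superoperator with \emph{two} main operational elements: one produces the (scaled) mismatch $3(\nxt(c'_{i-1})-\val(c'_i))$ and, if observed, rejects immediately; the other resets the register to $(1,\nxt(c'_i),0,0)$ and continues. Thus every transition is tested in isolation and then discarded, so there is no possibility of later cancellation; moreover, since the first mismatch is tested at position $l'\le l$, the rejection weight $9(1/D)^{2l'}$ automatically dominates the final acceptance weight $(1/D)^{2l}$, giving the clean $9{:}1$ ratio. Your scheme instead routes all differences into a persistent $q_{rej}$ and postpones the comparison, which lets a cheating prover arrange $\sum_i d_i=0$; you then patch this by having some auxiliary outcomes reject on the current $q_{rej}$. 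That patch can be made to work, but it is exactly the paper's per-delimiter test in disguise, and the remaining ``accept from $q_{acc}$, reject from $q_{rej}$ at the end'' is doing no real work. You should also note that the needed soundness gap must come from a constant amplification factor on the mismatch (the paper's factor~$3$), not from damping $q_{acc}$ ``as with $q_4$ in Theorem~\ref{thm:knapsack-game}'': there the damping exponent $n$ is fixed by the \emph{input}, whereas here the analogous count is chosen by the prover.

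Two further differences worth flagging: the paper enforces $|c'_i|\le k\,|w|$ \emph{deterministically} by using the input head as a length-$|w|$ counter, rather than by adding a step counter to $M$ (which constrains the honest transcript but not what a cheating prover sends); and the paper reads accept/reject \emph{classically} from the state symbol in the final configuration, rather than subtracting a hard-coded $\val(c_{acc})$.
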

\begin{proof}
    Let $k$ be an integer such that the length of any configuration of $M$ (as explained in Sec.~\ref{sec:encode-conf-TM}) is at most $k \cdot n $ for every $n>0$, where $n$ is the length of input. We assume that the symbol $ @ $ is not in the configuration alphabet. For a given input $w$, a valid configuration history of $M$ on $w$ is 
    \[
        c_1\confsep c_2 \confsep c_3 \confsep \cdots \confsep c_f \confsep \dollar,
    \]
    where $c_1$ is the initial configuration, $c_i$ is the configuration obtained from $c_{i-1}$ in a single transition for $i>1$, and $c_f$ is the final configuration. Here we separate each configuration with four symbols, and we apply a superoperator for each.\footnote{One may indeed use less symbols here by combining some of superoperators appropriately.}

    The verifier runs an infinite loop and in each iteration requests a valid configuration history of $M$ on $w$. During an iteration, whenever an outcome associated to an axillary operation element is observed, it is terminated and a new iteration begins.

    \textbf{Deterministic checks.} 
    The verifier does certain deterministic checks in each iteration and rejects the input if any of them fails.
    \begin{enumerate}
        \item  The verifier deterministically checks if the first configuration provided by the prover is $c_1$ by reading the input once from the left to the right. 
        \item  The verifier uses its classical states to count the number of $@$ symbols and checks if it is equal to 4 in each block.
        \item  The verifier checks if the configuration history ends with a $\dollar$ symbol.
        \item The verifier checks if there is only one halting configuration and it is the last one.
        \item The verifier ensures that the length of each configuration provided by the prover is at most $k \cdot |w|$ by using its input head and classical states: the verifier uses its input marked with the end-markers as a counter of length $|w|$, and then, easily count up to $k \cdot |w|$ by staying $k$ steps on each input symbol.
    \end{enumerate}

    Thanks to these deterministic checks -- in the remaining part, we assume that the prover sends a configuration history as
    \[
        c_1 \confsep c'_2 \confsep c'_3 \confsep \cdots \confsep c'_{f'} \confsep \dollar.
    \]
    Otherwise, the input is rejected by the verifier.

    \textbf{Quantum checks.} The verifier checks the correctness of the configuration sequences provided by the prover. It uses a quantum register with four states: $\{q_1,q_2,q_3,q_4\}$.
    
    While scanning $c_1$, it encodes the value of the configuration obtained from $c_1$ in a single transition, namely $\nxt(c_1)$, into the amplitude of $q_4$. When reading the forth $@$ symbol, $\nxt(c_1)$ is transferred to the amplitude of $q_2$, and the amplitude of $q_4$ is set to 0. 
    
    Then, the verifier scans $c'_2 \confsep$. The value of $c'_2$ is encoded into the amplitude of $q_3$, and $\nxt(c'_2)$ is encoded into the amplitude of $q_4$. The quantum state is as below just before reading the last $@$ symbol:
    \[
        N_2 \myvec{1 \\ \nxt(c_1) \\ \val(c'_2) \\ \nxt(c'_2) },
    \]
    where the $N_2$ is the normalization factor. The next superoperator has two main operational elements. The quantum state obtained by the first one is
    \begin{equation}
        \label{eq:mismatch}
        N''_2 \myvec{ 3(\nxt(c_1) - \val(c'_2)) \\ 0 \\ 0 \\ 0 },
    \end{equation}
    where $N''_2$ is the normalization factor. If this is observed, the input is rejected by the verifier. It is easy to see that this can be observed only if $ \nxt(c_1) \neq \val(c'_2) $, i.e., the prover cheats about $c'_2$, which is different than $c_2$. This part is indeed how the verifier checks if the prover cheats about the configuration history. The quantum state obtained by the second operation element is 
    \[
        N'_2 \myvec{1 \\ \nxt(c'_2) \\ 0 \\ 0},
    \]
    where $N'_2$ is the normalization factor. The iteration continues if this is observed.

    When scanning $c'_3 \confsep$, the quantum state, just before reading the last $@$ symbol, is
    \[
        N_3 \myvec{1 \\ \nxt(c'_2) \\ \val(c'_3) \\ \nxt(c'_3) },
    \]
    where the $N_3$ is the normalization factor. Again the superoperator has two main operational elements. The quantum state obtained by the first one is
    \[
        N''_3 \myvec{3(\nxt(c'_2) - \val(c'_3)) \\ 0 \\ 0 \\ 0 },
    \]
    where $N''_3$ is the normalization factor. If this is observed, the input is rejected by the verifier. This can happen only if $ \nxt(c'_2) \neq \val(c'_3) $. The quantum state by the second operation element is 
    \[
        N'_3 \myvec{1 \\ \nxt(c'_3) \\ 0 \\ 0},
    \]
    where $N'_3$ is the normalization factor. The iteration continues if this is observed.
    
    The verifier scans the rest of configurations provided by the prover in the same way. 
    
    After scanning $c'_{f'} \confsep$, the quantum state is 
    \[
        N'_{f'} \myvec{1 \\ 0 \\ 0 \\ 0},
    \]
    where $N'_{f'}$ is the normalization factor. By scanning the $\dollar$ symbol provided by the prover, the verifier makes its decision based on the halting configuration $c'_{f'}$. If it is an accepting (resp., a rejecting) configuration, the verifier accepts (resp., rejects) the input.
    If the length of configuration history except $\dollar$ provided by the prover is $l$, then the probability of this decision is 
    \begin{equation}
        \label{eq:2l}
        \mypar{ \frac{1}{D}}^{2l}.
    \end{equation}

    If there is any mismatch in the configuration history, i.e., $\nxt(c_i) \neq \val(c_{i+1}) $ for some $i$, the input is rejected with probability at least
    \begin{equation}
        \label{eq:2lprime}
        9 \mypar{ \frac{1}{D}  }^{2l'},
    \end{equation}
    where $l' \leq l$ is the number of symbols provided by the prover until that point.

    If $w$ is in the language, the configuration history provided by an honest prover is valid. So, the input is never rejected but accepted with probability $\mypar{ \frac{1}{D}}^{2l}$ in a single iteration. Thus, the input is accepted with probability 1.

    If $w$ is not in the language, we have two cases. If the prover is honest, the input is never accepted but rejected with probability $\mypar{ \frac{1}{D}}^{2l}$ in a single iteration. If the prover is cheating, then there must be a configuration mismatch. Thus, the rejecting probability in this iteration is at least 9 times more than any possible accepting probability due to Eq.~\ref{eq:2l} and Eq.~\ref{eq:2lprime}. So, the overall rejecting probability is at least
    \[
        \frac{9}{9+1} = \frac{9}{10}.
    \]
    By using a coefficient greater than 3 in Eq.~\ref{eq:mismatch} (and in the other mismatch checks), we can obtain better rejecting probability.

    The minimal halting probability is obtained when receiving a valid configuration history. That is,  
    \[
        \mypar{ \frac{1}{D} }^{2l} = D^{-2l}.
    \]
    The value of $l$ can be bounded by $|w| \cdot 2^{O(|w|)}$, as the number of all possible configurations can be at most exponential in $|w|$ when using linear space. Thus, the expected running time is
    \[
        2^{2^{O(|w|)}},
    \]
    which is double-exponential in $|w|$.   
\end{proof}

\begin{fact}
    \label{fact:linear-NTM} \cite{Yak13C}
    Any language recognized by a linear-space NTM can be verified by a 2QCFA with perfect-completeness in double-exponential expected time.
\end{fact}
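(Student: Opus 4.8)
The plan is to reuse the proof of Fact~\ref{fact:linear-DTM} almost verbatim, adding one standard device to cope with nondeterminism in the Arthur--Merlin model: we let the prover supply, interleaved with the configuration history, the nondeterministic choices that $M$ makes along the claimed computation path. Since the transition relation of $M$ is finite, each pair (state, scanned symbol) has at most some constant number $K$ of applicable rules; fix an ordering of them and let the \emph{choice alphabet} be $\{1,\ldots,K\}$. Once a choice $r$ is fixed, the successor of a configuration $c$ under $M$ is uniquely determined; denote it $\nxt_r(c)$. From the verifier's point of view the computation of $M$ on $w$ is thus deterministic \emph{relative to} the choice sequence, so the streaming transduction of Sec.~\ref{sec:encode-conf-TM} applies unchanged: reading $c$ from left to right with $O(1)$ classical memory and a constant look-ahead, and knowing $r$, the verifier can output and $m$-ary-encode $\nxt_r(c)$ on the fly, exactly as it outputs $\nxt(c)$ for a DTM.

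In each iteration the verifier requests from the prover an \emph{extended configuration history}
\[
    r_1 \confsep c_1 \confsep r_2 \confsep c'_2 \confsep r_3 \confsep c'_3 \confsep \cdots \confsep r_{f'} \confsep c'_{f'} \confsep \dollar,
\]
where $c_1$ is the initial configuration, each $r_i$ is a choice symbol announced just before the configuration $c'_i$ on which it acts (the symbol $r_{f'}$ preceding the halting configuration is simply unused), and $c'_{i+1}$ is claimed to equal $\nxt_{r_i}(c'_i)$. The verifier performs the deterministic checks of Fact~\ref{fact:linear-DTM} (the first configuration is $c_1$, the number of $@$ symbols in each block is correct, the stream ends with $\dollar$, there is exactly one halting configuration and it is the last, and every configuration has length at most $k\cdot|w|$), together with one extra check: while scanning $c'_i$ it keeps in its classical states the state symbol and the scanned symbol of $c'_i$ (only $O(1)$ information) and verifies that $r_i$ does not exceed the number of rules applicable to that pair; if any check fails, the input is rejected. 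The quantum part is identical to Fact~\ref{fact:linear-DTM} with $\nxt$ replaced by $\nxt_{r_i}$: using a four-state register the verifier carries $\nxt_{r_{i-1}}(c'_{i-1})$ in $q_2$, encodes $\val(c'_i)$ in $q_3$ and $\nxt_{r_i}(c'_i)$ in $q_4$ while reading $c'_i\confsep$, and then applies a two-main-outcome superoperator whose first main outcome produces an amplitude proportional to $3\bigl(\nxt_{r_{i-1}}(c'_{i-1})-\val(c'_i)\bigr)$ and rejects. At the final $\dollar$ the verifier accepts or rejects according to whether $c'_{f'}$ is an accepting or a rejecting halting configuration, and the halting probability of the iteration is $(1/D)^{2l}$, where $l$ is the number of symbols supplied by the prover.

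Correctness follows the same three cases as in Fact~\ref{fact:linear-DTM}. If $w$ is in the language, $M$ has an accepting computation path, and a shortest one repeats no configuration (otherwise the loop could be excised), hence has at most $2^{O(|w|)}$ configurations, each of length at most $k\cdot|w|$; the honest prover sends this path together with the choices it uses, so all deterministic checks pass, no mismatch outcome can occur, $c'_{f'}$ is accepting, and the input is accepted with probability $(1/D)^{2l}$ in each iteration, hence with probability $1$ over the infinite loop. If $w$ is not in the language, no computation path of $M$ on $w$ accepts; therefore, whenever a prover's stream survives all deterministic checks and produces no mismatch outcome, it encodes a genuine computation path of $M$ (each $r_i$ is applicable and $c'_{i+1}=\nxt_{r_i}(c'_i)$) ending in a halting configuration, which must then be rejecting, so the iteration rejects; and if some mismatch $\nxt_{r_i}(c'_i)\neq\val(c'_{i+1})$ occurs, the rejecting amplitude has squared norm at least $9$ (the two quantities are distinct integers), so, exactly as in Fact~\ref{fact:linear-DTM}, the rejecting probability in that iteration is at least $9$ times every accepting probability and the overall rejecting probability is at least $9/10$, which can be pushed arbitrarily close to $1$ by enlarging the coefficient. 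The minimal per-iteration halting probability is attained on a valid history, for which $l\le|w|\cdot 2^{O(|w|)}$, so the expected running time is $2^{2^{O(|w|)}}$, i.e., double-exponential in $|w|$.

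The only genuinely new point to verify is the on-the-fly deterministic transduction: one must be sure that, while reading $c'_i$ from left to right with only $O(1)$ classical memory and a constant look-ahead --- having learned $r_i$ immediately before $c'_i$ --- the verifier can \emph{simultaneously} (i) detect the state symbol and the scanned symbol of $c'_i$ and reject if $r_i$ is not applicable to them, and (ii) output and $m$-ary-encode $\nxt_{r_i}(c'_i)$ symbol by symbol. This is exactly where the structural observations of Sec.~\ref{sec:encode-conf-TM} are used: a successor configuration differs from its predecessor only in the $O(1)$ cells around the head, so a fixed look-ahead suffices to determine each output symbol; treating the choice $r_i$ as one more fixed parameter of that local rewriting rule changes nothing. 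With this in hand, the rest of the construction and of the analysis is a verbatim copy of the proof of Fact~\ref{fact:linear-DTM}.
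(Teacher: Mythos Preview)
Your proposal is correct and follows essentially the same approach as the paper: let the prover supply the nondeterministic choice before each configuration so that the successor becomes uniquely determined, and then reuse the DTM protocol verbatim with $\nxt$ replaced by the choice-dependent $\nxt_{r_i}$. The paper's own proof is a one-paragraph sketch of exactly this idea; your version simply spells out the stream format, the applicability check, and the correctness/time analysis in more detail.
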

\begin{proof}
    We adopt the proof given for linear-space DTMs. The nondeterministic choices are selected by the prover. Before receiving a configuration $ c $ from the prover, the verifier asks for which nondeterministic choice is picked when in $c$. Based on the provided selection, the verifier prepares $\nxt(c)$ and compares it with the configuration provided by the prover immediately after $c$. The rest of the proof is the same. 

    For the member inputs, the honest prover sends a valid configuration history, and so, those inputs are accepted by the verifier with probability 1.

    For the non-member inputs, the verifier can make the decision of ``acceptance'' only if an invalid configuration history is provided by the prover. Thus, the rejecting probability due to a mismatch is always sufficiently bigger than the any potential accepting probability as explained before.
\end{proof}

\begin{theorem}
    Any language $L$ linear-space reducible to $\knapsackgame$ is verified by a 2QCFA verifier, say $V$, with perfect completeness. The protocol runs in double-exponential expected time.
\end{theorem}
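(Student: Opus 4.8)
The plan is to combine, within a single left-to-right scan, the configuration-history verification of Fact~\ref{fact:linear-DTM} with the $\knapsackgame$ protocol of Theorem~\ref{thm:knapsack-game}. Let $f$ be the assumed reduction and $M_f$ a deterministic Turing machine computing it in space linear in $|w|$; without loss of generality $M_f$ writes $f(w)$ on a write-only output tape whose head moves only to the right, so the symbols of $f(w)$ are produced one at a time, in order. Since $M_f$ is linear-space it halts on every input, each of its configurations has length $O(|w|)$, and $|f(w)| = 2^{O(|w|)}$. The central point is that the verifier never needs to store $f(w)$: as it scans a configuration history of $M_f$ supplied by the prover, it can read off, from the transition taken at each configuration, the output symbol (if any) emitted there, and pipe this stream straight into the arithmetic of Theorem~\ref{thm:knapsack-game}.

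Concretely, $V$ runs an infinite loop; in each iteration it requests from the prover a configuration history $c_1 \confsep c'_2 \confsep \cdots \confsep c'_{f'} \confsep \dollar$ of $M_f$ on $w$, performs exactly the deterministic checks of Fact~\ref{fact:linear-DTM} (correctness of $c_1$, four $@$'s per block, termination with $\dollar$, a unique halting configuration occurring last, configuration lengths at most $k|w|$), and additionally checks, with its classical states and on the fly, that the emitted output stream has the syntactic form of Eq.~\ref{eq:knapsackgame}; any failure rejects the input. The quantum register is the tensor product of a four-state register running the mismatch test of Fact~\ref{fact:linear-DTM} (which, at each configuration boundary, either detects $\nxt(c'_i) \neq \val(c'_{i+1})$ and rejects via a state of the form of Eq.~\ref{eq:mismatch}, or continues) and a four-state register running the computation of Theorem~\ref{thm:knapsack-game} on the output stream: at each emitted $\forall$ the verifier flips a fair coin to choose $x_i\in\{a_i,b_i\}$, at each emitted $\exists$ it requests a bit from the prover to choose $y_i\in\{e_i,f_i\}$, and it maintains $\val(S)-T$ and the $(1/4)^i$ factor as in the proof of Theorem~\ref{thm:knapsack-game}. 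The per-symbol superoperator is the product of the two sub-protocols' superoperators, padded with auxiliary operational elements (Sec.~\ref{sec:rational-super}) so that a single denominator $D$ is used throughout; it has one main operational element except at configuration boundaries, where — as in Fact~\ref{fact:linear-DTM} — it has two. When a non-main outcome is observed the iteration terminates. On reading the final $\dollar$, $V$ accepts on the $q_4$-amplitude and rejects on the $q_2$-amplitude of the knapsack register, exactly as in Step~$\mathbf{n+1}$ of Theorem~\ref{thm:knapsack-game}.

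The correctness analysis is the merge of the two earlier ones. If $w\in L$, the honest prover sends the genuine, hence valid, history of $M_f$; the mismatch test never fires, the emitted stream is $f(w)\in\knapsackgame$, and for every coin-determined universal choice the prover answers the existential queries so that $S=T$. Hence the input is never rejected and is accepted with positive probability in each iteration, giving perfect completeness. If $w\notin L$, then for any prover either the history is not a genuine run of $M_f$ — so $\nxt(c'_i)\neq\val(c'_{i+1})$ for some $i$, and the mismatch outcome rejects with probability at least a constant factor (at least $9$, more with a larger coefficient in Eq.~\ref{eq:mismatch}) times any accepting probability of that iteration, since acceptance can only occur later, at $\dollar$ — or the history is genuine, so the emitted stream is $f(w)\notin\knapsackgame$ and, by the analysis of Theorem~\ref{thm:knapsack-game}, some universal choice has $S\neq T$ and the rejecting-to-accepting ratio of that iteration is at least $8^n$. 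Either way the overall rejection probability exceeds a fixed constant greater than $\tfrac12$, amplifiable by tuning the coefficients. The minimal per-iteration halting probability is attained on a valid full history and has the form $\mypar{1/D}^{cl}$ for a constant $c$, where $l$ is the history length, which is bounded by $|w|\cdot 2^{O(|w|)}$; together with $n\leq|f(w)|\leq 2^{O(|w|)}$ this quantity is $2^{-2^{O(|w|)}}$, so the expected number of iterations, and hence the expected running time, is $2^{2^{O(|w|)}}$, double-exponential in $|w|$.

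The step I expect to be the main obstacle is the bookkeeping in combining the two quantum sub-protocols inside one scan: checking that their superoperators really do combine into a single rational-valued superoperator with a single main operational element away from configuration boundaries, that one common denominator $D$ can be chosen so that the ``uniform per-iteration halting probability'' property survives (this is what drives perfect completeness), and that with two distinct sources of rejection — a configuration mismatch and a knapsack mismatch — the worst-case-over-provers comparison of rejecting and accepting probabilities still yields a constant advantage, using that acceptance is concentrated at the final $\dollar$ while a detected mismatch rejects no later. Nothing here is conceptually new; it reassembles the ingredients of Theorem~\ref{thm:knapsack-game} and Fact~\ref{fact:linear-DTM}, but this is where the care lies.
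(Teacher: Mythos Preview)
Your proposal is correct and follows essentially the same approach as the paper: run the configuration-history verifier of Fact~\ref{fact:linear-DTM} and the $\knapsackgame$ verifier of Theorem~\ref{thm:knapsack-game} in parallel on a tensor-product register, piping the reducer's output symbols (extracted on the fly from the history) into the latter, and accept only when both sub-verifiers would accept. Your write-up is in fact more explicit than the paper's about the superoperator bookkeeping (common denominator $D$, single main operational element away from boundaries) and about why the reject-to-accept ratio survives the merge, which are exactly the points the paper leaves at the level of ``$V_1$ and $V_2$ run in parallel.''
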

\begin{proof}
    Let $M$ be a linear-space DTM reducing $L$ to $\knapsackgame$. A reducer may have a read-only tape and a working tape separately, especially when focusing on sub-linear space (e.g., log-space) reductions. But, in our case, $M$ has a single read/write tape. A reducer uses a write-only output tape to write the output string. However, in our simulation below, we assume that $M$ does not have any output tape. We use the same configuration history format presented in the proof of Fact~\ref{fact:linear-DTM} for $M$. While it scans any configuration of $M$, the verifier knows the next symbol to be written on the output tape. To process this symbol, it uses its internal states to store it temporarily. We assume that once $M$ completes its computation, it enters an accepting state, which is the only halting state.
    
    Our verifier $V$ executes the 2QCFA verifiers given in the proofs of Fact~\ref{fact:linear-DTM} and Theorem~\ref{thm:knapsack-game} in parallel. We name them $V_1$ and $V_2$, respectively. 
    \begin{itemize}
        \item The verifier $V$ requests from the prover the valid configuration history of $M$ on the given input.
        \item The verifier $V_1$ ensures that the provided configuration sequence is valid or not.
        \item While reading every configuration, $V$ checks if $M$ outputs any symbol when in this configuration. If so, $V$ stores this symbol.
        \item Once a new symbol is stored by $V$, it is fed to $V_2$.
    \end{itemize}

     The quantum register of $V$ is combined by the quantum registers of $V_1$ and $V_2$. But, each sub-register works separately. So, we can see them as two separate quantum registers.

    The input is accepted by $V$ if both $V_1$ and $V_2$ gives the decision of ``acceptance''. Otherwise, the input is rejected by $V$. In this way, the perfect-completeness property is preserved.

    Let $w$ be a given input, and let $u$ be the output string by $M$ when starting with $w$. 
    
    \textbf{Member inputs ($w \in L$).} By communicating with an honest prover, $V$ obtains $u$ symbol by symbol and then feed it to $V_2$. We know that $u$ is a member string of $\knapsackgame$. As the computation history of $M$ is valid, $V_1$ only makes the decision of ``acceptance''. Similarly, the honest prover provides valid nondeterministic choices to $V_2$ when processing $u$. Then, $V_2$ also only makes the decision of ``acceptance''. Thus, $w$ is accepted by $V$ with probability 1.

    \textbf{Non-member inputs ($w \notin L$).} For simplicity, we assume that $V$ communicates with two provers: $V_1$ communicates with $P_1$, and $V_2$ communicates with $P_2$. (There is a single prover, but in this way, we can separate the communications of $V_1$ and $V_2$ from each other.)
    \begin{itemize}
        \item If $P_1$ is honest, $V_1$ only accepts $w$, and $u \notin \knapsackgame $ is fed to $V_2$. Then, we know that $V_2$ rejects $u$ with high probability independent of the honesty of $V_2$. Thus, $V$ rejects $w$ with high probability.
        \item If $P_1$ is not honest, then $V_1$ rejects $w$ with high probability. So, even if $V_2$ only accepts $u$, $V$ rejects $w$ with high probability.
    \end{itemize}
The verifier $V_1$ runs in double-exponential expected time. The (expected) length of $u$ can be in exponential in $|w|$, and so, $V_2$ also runs in double-exponential expected time. Both $V_1$ and $V_2$ runs in parallel, so, $V$ runs in double-exponential expected time. 
\end{proof}

\begin{corollary}
    Any language is PSPACE is verified by a 2QCFA verifier with perfect-completeness. The protocol runs in double-exponential time.
\end{corollary}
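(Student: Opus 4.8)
The plan is to obtain the corollary as an immediate consequence of the preceding theorem together with the fact, recalled at the start of Section~\ref{sec:knapsack-game}, that $\knapsackgame$ is $\mathsf{PSPACE}$-complete under log-space reductions \cite{Travers06,FearnleyJ13,FearnleyJ15}. So let $L \in \mathsf{PSPACE}$ be arbitrary. Then there is a log-space transducer $R$ with $w \in L$ iff $R(w) \in \knapsackgame$. Since a log-space reduction is in particular a linear-space reduction, the preceding theorem applies to $L$ once $R$ is put into the required form, and it yields a 2QCFA verifier for $L$ with perfect-completeness and double-exponential expected running time.

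The only point that needs a line of care is a model conversion: the preceding theorem is stated for a reduction computed by a single read/write tape linear-space DTM without an output tape, whereas a log-space transducer is normally presented with a read-only input tape, an $O(\log |w|)$-cell read/write work tape, and a write-only output tape. First I would observe that such a transducer is simulated by a single-tape DTM $M$ using space $O(|w|)$: keep $w$ on the single tape, reserve an $O(\log|w|)$-cell scratch region appended to its right, and simulate $R$ step by step; whenever $R$ would emit an output symbol, $M$ records it in its finite control (this is exactly the situation handled in the proof of the preceding theorem, where the verifier reads off the output of $M$ from the configurations of $M$ while scanning the requested configuration history). This $M$ is a linear-space DTM reducing $L$ to $\knapsackgame$ in precisely the form the theorem demands. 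Finally, one assumes $M$ halts in a unique accepting state, as in that proof.

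Applying the preceding theorem to $L$ and $M$ then produces a 2QCFA verifier $V$ verifying $L$ with perfect-completeness: member strings are accepted with probability $1$, and non-member strings are rejected with probability bounded away from $\frac12$, amplifiable by tuning the mismatch coefficients exactly as in the proofs of Fact~\ref{fact:linear-DTM} and Theorem~\ref{thm:knapsack-game}. Both internal verifiers (the configuration-history checker and the $\knapsackgame$ checker) run in double-exponential expected time and are run in parallel inside $V$, so $V$ runs in double-exponential expected time in $|w|$, giving the stated bound. I expect no real obstacle here: the mathematical content lies entirely in the preceding theorem, and the corollary is a packaging step; the only thing to be slightly careful about is the tape-model conversion from the log-space transducer to the single-tape linear-space DTM assumed by that theorem.
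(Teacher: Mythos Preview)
Your proposal is correct and follows essentially the same route as the paper: invoke the $\mathsf{PSPACE}$-completeness of $\knapsackgame$ under log-space reductions, observe that a log-space transducer can be simulated by a single-tape linear-space DTM of the form required by the preceding theorem, and apply that theorem. The paper's own proof is the same two-line packaging (log-space reduction $\Rightarrow$ linear-space reduction $\Rightarrow$ apply the theorem), with your extra care on the tape-model conversion simply spelling out what the paper leaves as a one-sentence remark.
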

\begin{proof}
    This follows from the fact that any language in PSPACE is log-space reducible to $\knapsackgame$. Remark that any log-space reduction can be exactly simulated by a linear-space reducer without using any extra work tape.
\end{proof}

\section{Concluding Remarks}
\label{sec:concluding-remarks}
%%%%%%%%%%%%%%%%%%%%%%%%%%%%%%%%%%%%%
%%%%%%%%%%%%%%%%%%%%%%%%%%%%%%%%%%%%%

Arthur-Merlin (AM) systems with 2PFA verifiers are too weak when comparing with AM systems with 2QCFA verifiers. Comparing them with private-coin interactive proof systems (IPSs) with 2PFAs verifiers makes more sense. We observe that being in superposition makes the protocols stronger similar to private-coin protocols. In addition to that 2QCFAs benefit from interference (see \cite{AW02,YS10B}). 

It is open if IPSs with 2PFA verifiers can go beyond $\mathsf{ASPACE(n)} = \mathsf{DTIME\left(2^{O(n)}\right)}$ or if they can verify any language in $ \mathsf{PSPACE} $ (or $\mathsf{NP}$). The class $\mathsf{AM_1(2QCFA)}$ already contains $\mathsf{ASPACE(n)}$, $\mathsf{PSPACE} = \mathsf{QIP}$, and an $\mathsf{NEXP}$-complete language. One may ask if $\mathsf{AM_1(2QCFA)}$ or $\mathsf{AM(2QCFA)}$ contains $\mathsf{EXP}$ or $\mathsf{NEXP}$ (or beyond). In such case, the verification power of 2QCFAs can be compared with polynomial-time multi-prover interactive proof systems.

%%%%%%%%%%%%%%%%%%%%%%%%%%%%%%%%%%%%%
%%%%%%%%%%%%%%%%%%%%%%%%%%%%%%%%%%%%%
\section*{Acknowledgments}
%%%%%%%%%%%%%%%%%%%%%%%%%%%%%%%%%%%%%
%%%%%%%%%%%%%%%%%%%%%%%%%%%%%%%%%%%%%

Yakary{\i}lmaz was partially supported by the Latvian Quantum Initiative under European Union Recovery and Resilience Facility project no. 2.3.1.1.i.0/1/22/I/CFLA/001

\bibliographystyle{quantum}
\bibliography{ref}

\end{document}